\newtheorem{teorema}{Theorem}
\newtheorem{lemma}{Lemma}
\DeclareMathOperator{\sgn}{sgn}
\DeclareMathOperator{\diag}{diag}
\begin{document}
\title{Rescaled Magnetization\\ for Critical Bipartite Mean-Fields Models}

\author{Micaela Fedele}
\affil{Courant Institute of Mathematical Sciences, New York University}
%\author[2]{Charles M. Newmann}
%\affil[2]{Courant Institute of Mathematical Sciences, New York University}
\date{}

\maketitle

\begin{abstract}
We consider a bipartite generalization of the Curie-Weiss model in a critical regime. In order to study the asymptotic behavior of the 
random vector of the 
total magnetization we apply the change of variables that diagonalizes the Hessian matrix of the pressure functional 
associated to the model. We obtain a new vector that, suitably 
rescaled, weakly converges to the product of a Gaussian distribution and a distribution proportional to $\exp(-\xi x^{4})$, where the 
positive constant $\xi$ can be computed from the pressure functional. 
\end{abstract}
\noindent 
Keywords: {\it bipartite mean-field models, central limit theorems} 
\vspace{1cm}
\section*{Introduction}
The standard statistical mechanics approach to study the phase transitions of a
model amounts to analyze its pressure functional looking for points of non-analiticity. A different and interesting way to achieve this result is provided by the description of the asymptotic behavior of the sum of the
random variables occurring in the model \cite{kadanoff1976scaling}. Due to the interaction, these variables do not satisfy the hypothesis of 
independence required in order to apply the central limit theorem. Nevertheless, if the phase of the model is not critical it is expected that their sum, with
square root normalization, shows a central limit type behaviour and thus converges toward a Gaussian distribution. 

For a wide class of models known in literature as Curie-Weiss models \cite{kac1969mathematical, thompson1988classical,ellis2005entropy} such a prediction was 
confirmed in \cite{ellis1978limit,ellis1978statistics,ellis1980limit} where it was also described the asymptotic non-Gaussian behavior corresponding to the 
critical point of the model (this result was first shown in \cite{simon1973varphi4}). In \cite{fedele2011scaling}, the same analysis was conducted for the multi-species mean-field model: a generalization of the Curie-Weiss model in which 
spin random variables are partitioned into an arbitrary number of groups and both the interaction and the external field parameters take 
different values only depending on the groups variables belong to. 

This model, whose bipartite symmetric version was introduced in the 50s to reproduce the phase transition of the so-called metamagnets 
\cite{gorter1956transitions, motizuki1959metamagnetism, bidaux1967antiferromagnetisme, kincaid1975phase, galam1980new}, recently has been receiving a 
renewed attention \cite{gallo2008bipartite, barra2011equilibrium, fedele2012rigorous, fedele2013inverse} mainly thanks to its potential ability to account for 
the collective behavior of socio-economic agents \cite{contucci2007modeling, contucci2008phase, gallo2008parameter, agliari2010new, barra2012statistical}. 
The idea of using Statistical Mechanics to describe the outcomes of individual decisions at population level appeared in literature in the early 80s \cite{galam1982sociophysics} as a consequence of the increased emphasis on the role played by social interaction in shaping personal preferences. In fact, the occurence of sudden behavioral shifts such as trends, fads and crashes might be hardly understood if the agents would take decisions without influencing one another. Interestingly, the need to incorporate peer-to-peer effects into the framework of the Discrete Choice theory \cite{mcfadden2001economic}, a model able to forecast with a remarkable agreement collective phenomena in which social interactions do not play a substantial role, led to the formulation of a model \cite{durlauf1999can, brock2001discrete} equivalent from the mathematical point of view to the Curie-Weiss model. Since the Discrete Choice theory rephrased as a statistical mechanics model corresponds to a mixture of a finite number of discrete perfect gases, its natural extension to the interacting case is represented by the multi-species mean-field model. 

Despite the great importance that this model may play toward the understanding of socio-economical phenomena, a complete description of 
its phase space is 
lacking to this day. On one hand standard investigations of the critical points of the pressure functional were performed only in 
specific cases \cite{fedele2012rigorous, barra2011equilibrium}, on the other the analysis of the asymptoic behavior of the sums of the 
spins \cite{fedele2011scaling} was done under the assumptions that the Hamiltonian is a convex function of the 
the sums of the spins of each group (convexity hypothesis), and the pressure functional can be written as an homogeneous and 
strictly positive polynomial around its minimum points (homogeneity hypothesis).

In this paper we made a step forward in filling the gap in literature dealing with a specific situation beyond the homogeneity hypotesis for the 
bipartite mean-field model in absence of the external field. In particular we consider the model as the unique minimum point of the 
pressure functional is the origin and its Hessian matrix computed in that point has determinant equal to zero without being equal to the 
null matrix. By applying to the vector of the total magnetizations the orthogonal matrix that 
diagonalizes the Hessian matrix of the pressure functional, we obtain a new 
random vector that, properly rescaled, weakly converges to the product of a Gaussian distribution and a distribution proportional to 
$\exp(-\xi x^{4})$, where the positive constant $\xi$ can be computed from the pressure functional.
The discovery of a non-central limit type behavior allows us to assert that the bipartite mean-field model 
undergoes a phase transition in the considered scenario, as previously proved only when the two groups of particles had the same 
size and the same strenght of internal interaction \cite{fedele2012rigorous}. 

This paper is organized as follows. Section one describes the model and states the main result. Section two contains the proof of the main result. The appendix presents the proof of the lemmas used to prove the main result.

\section{Definitions and Statement}
\noindent We consider a system of $N$ spin particles divided in two subsets $P_{1}$ and $P_{2}$, respectively of size $N_{1}$ and $N_{2}$, such that 
$P_{1}\cap P_{2}=\emptyset$ and $N_{1}+N_{2}=N$.
%We consider two groups $P_{1}$ and $P_{2}$ of spin particles. Both the groups have size $N$ and $P_{1}\cap P_{2}=\emptyset$.
Particles interact with each other according to the following Hamiltonian: 
\begin{equation}\label{Hamiltonian.1}
H_{N}(\boldsymbol{\sigma})=-\frac{1}{2N}\sum_{i,j=1}^{N}J_{ij}\sigma_{i}\sigma_{j}
%H_{2N}(\boldsymbol{\sigma})=-\frac{1}{4N}\sum_{i,j=1}^{2N}J_{ij}\sigma_{i}\sigma_{j}
\end{equation}
where $\sigma_{i}$ represents the spin of the particle $i$ and $J_{ij}$ is the parameter that tunes the mutual 
interaction between the particles $i$ and $j$. Such a parameter takes values according to the following symmetric matrix:
\vspace{-0.3cm}
\begin{displaymath}
         \begin{array}{ll}
                \\
                P_1 \left\{ \begin{array}{ll||}
                                      \\
                                   \end{array}  \right.
                                        \\
                P_2 \left\{ \begin{array}{ll||}
                                        \\
                                   \end{array}  \right.

         \end{array}
          \!\!\!\!\!\!\!\!
         \begin{array}{ll||}
                \quad
                 \overbrace{\qquad }^{\textrm{$P_1$}}
                 \overbrace{\qquad }^{\textrm{$P_2$}}
                
                  \\
                 \left(\begin{array}{c|c}
                               \mathbf{ J}_{11}  &  \mathbf{ J}_{12} \\
                                 \hline
                              \mathbf{ J}_{12}^{T} & \mathbf{ J}_{22} \\
                             
                      \end{array}\right)
               \end{array}
\end{displaymath}\\
\noindent where each block $\mathbf{ J}_{ls}$ has constant elements $ J_{ls}$. We assume $J_{11}$ and $J_{22}$ be strictly positive, 
while $J_{12}$ can be positive or negative allowing both ferromagnetic and antiferromagnetic interactions. We observe that for $J_{12}=0$
the bipartite mean-field model degenerates toward two distinct Curie-Weiss models.

By introducing the total magnetization of each group:
\begin{equation*}
S_{1}(\boldsymbol{\sigma})=\sum_{i \in P_{1}}\sigma_{i}\quad\qquad S_{2}(\boldsymbol{\sigma})=\sum_{i \in P_{2}}\sigma_{i}
\end{equation*}
we may easily express the Hamiltonian (\ref{Hamiltonian.1}) as a binary quadratic form:
\begin{equation}\label{Hamiltonian.2}
 H_{N}(\boldsymbol{\sigma})=-\frac{1}{2N}\langle \mathbf{J}\mathbf{S},\mathbf{S}\rangle
%H_{2N}(\boldsymbol{\sigma})=-\frac{1}{4N}\langle \mathbf{J}\mathbf{S}^{N},\mathbf{S}^{N}\rangle
\end{equation}
where $\mathbf{S}=(S_{1}(\boldsymbol{\sigma}), S_{2}(\boldsymbol{\sigma}))$ is the vector of the total magnetizations and
\begin{equation*} 
\mathbf{J}=\begin{pmatrix}
   J_{11} & J_{12}\\
J_{12} & J_{22}
  \end{pmatrix}
\end{equation*}
is the so-called reduced interaction matrix.
The joint distribution of a spin configuration $\boldsymbol{\sigma}=(\sigma_{1},\dots ,\sigma_{N})$ is given by the Boltzmann-Gibbs 
measure:
\begin{equation}\label{measure.BG}
P_{N,\mathbf{J}}\{\boldsymbol{\sigma}\}=Z_{N}^{-1}\exp(-H_{N}(\boldsymbol{\sigma}))\prod\limits_{i=1}^{N}
d\rho(\sigma_{i})
\end{equation}
where $Z_{N}$ is the partition function
\begin{equation*}\label{partition}
Z_{N}=\int_{\mathbb{R}^{N}}\exp(-H_{N}(\boldsymbol{\sigma}))\prod\limits_{i=1}^{N}d\rho(\sigma_{i})
\end{equation*}
and $\rho$ is the measure:
\begin{equation*}\label{measure.ro}
\rho(x)=\frac{1}{2}\Big(\delta(x-1)+\delta(x+1)\Big)
\end{equation*}
where $\delta(x-x_{0})$ with $x_{0}\in\mathbb{R}$ denotes the unit point mass with support at $x_{0}$. The definition of $\rho$ implies 
that each spin variable can take only the values $\pm 1$. The inverse temperature parameter $\beta$ is not explicitly written because we 
consider it absorbed within the model parameters.
%\begin{equation}\label{J.positive.defined}
%J_{11}(J_{11}+\epsilon)-J_{12}^{2}>0.
%\end{equation}

The existence of the thermodynamic limit of the pressure $p_{N}=N^{-1}\ln Z_{N}$ associated to the model is proved in 
\cite{gallo2008bipartite} where it is also computed the exact value of such a limit for models whose Hamiltonian is a convex function of 
the total magnetizations (for the computation of the limit in the general case see \cite{fedele2012rigorous}). It holds:
\begin{equation*}
 \lim_{N\rightarrow\infty}p_{N}=\ln 2-\inf\{G(x_{1},x_{2}):(x_{1},x_{2})\in[-1,1]^{2}\}
\end{equation*}
where the pressure functional $G$ is:
\begin{align}\label{pressure.functional}
G(x_{1},x_{2}) & = \frac{1}{2}\left(\alpha^{2}J_{11}x_{1}^{2}+2\alpha(1-\alpha)J_{12}x_{1}x_{2}+(1-\alpha)^{2}J_{22}x_{2}^{2}\right)\nonumber\\
&\quad-\alpha\ln\cosh(\alpha J_{11}x_{1}+(1-\alpha)J_{12}x_{2})\nonumber\\
&\quad-(1-\alpha)\ln\cosh(\alpha J_{12}x_{1} +(1-\alpha)J_{22}x_{2})
\end{align}
with $\alpha=N_{1}/N$ the relative size of the subset $P_{1}$.
%\noindent The extremality conditions of the function $G$ give the Mean Field Equations of the model:
%\begin{equation}\label{mean.field.equations}
%\begin{cases}
%x_{1} &\!\!\!\!= \tanh\left(\dfrac{J_{11}}{2}x_{1}+\dfrac{J_{12}}{2}x_{2}\right) \\\\
%x_{2} &\!\!\!\!=\tanh\left(\dfrac{J_{12}}{2}x_{1}+\dfrac{J_{11}+\epsilon}{2}x_{2}\right)
%\end{cases} 
%\end{equation}

In \cite{fedele2011scaling}, it is shown the basic role played by the functional $G$ in determining the limiting behavior of the random 
vector $\mathbf{S}$. We recall briefly the results, obtained under the convexity and the homogenity hypothesis described in the 
introduction.
%assumptions that the Hamiltonian (\ref{Hamiltonian.2}) is a convex function of the total magnetizations (convexity hypothesis) and that 
%$G$ can be written as an homogeneous and strictly positive polynomium  around each of its minimum points (homogeneity hypothesis). 
When $G$ has a unique minimum point, the random vector $\mathbf{S}$, suitably rescaled, weakly converges to a bivariate Gaussian only if the order of the approximation of $G$ around that point is the second. Otherwise, $\mathbf{S}$ converges to a 
distribution proportional to $\exp(-\bar{P_{k}}(x_{1},x_{2}))$ where $\bar{P_{k}}(x_{1},x_{2})$ is the homogeneous and strictly positive 
polynomium of order $k>2$ that approximates $G$ around the minimum point. When there are more minimum points, analogous results are valid locally around each of them. 

In this paper, we consider the bipartite mean field model defined by the Hamiltonian (\ref{Hamiltonian.2}) as
the pressure functional $G$ has a unique minimum point, the origin, in which the determinant of its Hessian matrix is equal to zero and 
the convexity hypothesis is still verified. In the considered case, the homogeneity hypothesis is not true unless all the elements 
of the Hessian matrix of $G$ at the origin are equal to zero. Due to the convexity hypothesis that happens if and only if $J_{12}=0$. 
The asymptotic behavior of the vector of the total magnetizations $\mathbf{S}$ as the parameter $J_{12}$ assumes values different from 
zero is investigated in the following:

%In particular when $(2-J_{11})(2-J_{11}-\epsilon)> J_{12}^{2}$ the determinant of the Hessian matrix of the function $G$ computed in the origin
%is positive. Therefore $(0,0)$ is a minimum point and the limit behaviour of the properly normalized random vector of the sums of spins is central 
%limit type. Otherwise when the parameters satisfy the third expression as an identity, $\det H_{G}(0,0)=0$.  
%This matrix can be written in the following fashion:
%\begin{equation}
%\boldsymbol{\mathcal{H}}_{G}=\frac{1}{2}\left(\frac{1}{2}\mathbf{J}-\boldsymbol{\mathcal{H}}_{E}\right)
%\end{equation}
%where $\boldsymbol{\mathcal{H}}_{E}$ is the Hessian matrix of the following function:
%\begin{equation}
% E(x_{1},x_{2})= \ln\cosh\left(\frac{J_{11}}{2}x_{1}
%+\frac{J_{12}}{2}x_{2}\right)+\ln\cosh\left(\frac{J_{12}}{2}x_{1}+\frac{J_{11}+\epsilon}{2} x_{2}\right).
%\end{equation}
%By computation it is easy to verify that the matrixes $\mathbf{J}$ and $\boldsymbol{\mathcal{H}}_{E}$ commute. Therefore they are simultaneously
%diagonalizable and the Hessian matrix of $G$, a linear combination of them, is too. For semplicity we work on the reduced interaction matrix $\mathbf{J}$.  
\begin{teorema}\label{teo}
	Consider the bipartite mean-field model described by the Hamiltonian (\ref{Hamiltonian.2}) where the matrix $\mathbf{J}$ is positive definite with 
	$J_{12}\neq 0$ and let the origin be the unique minimum point of the pressure functional $G$ given by (\ref{pressure.functional}). 
	Denoted by $\lambda_{M}$ and $\lambda_{m}$ respectively the largest and the smallest eigenvalue of the Hessian matrix of $G$ computed in the origin and by
	$\mathbf{v}_{M}$ and $\mathbf{v}_{m}$ the corresponding eingenvectors, define the random vector $\mathbf{\widetilde{S}}=(\widetilde{S}_{1}(\boldsymbol{\sigma}),\widetilde{S}_{2}(\boldsymbol{\sigma}))$ as:
	\begin{equation}\label{definizione.S.tilde}
	\mathbf{\widetilde{S}}=(\mathbf{A}^{2})^{-1}\mathbf{P}\mathbf{A}^{2}\mathbf{S}
	\end{equation} 
	 where $\mathbf{A}=\diag\{\sqrt{\alpha},\sqrt{1-\alpha}\}$ and 
	$\mathbf{P}=(\mathbf{v}_{M}||\mathbf{v}_{M}||^{-1};\mathbf{v}_{m}||\mathbf{v}_{m}||^{-1})$. If $\lambda_{m}=0$ then there exist $\xi_{1},\xi_{2}\in\mathbb{R}$
	strictly positive such that, as $N\rightarrow\infty$, the random vector 
	\begin{equation}\label{vettore.S.1}
	\left(\dfrac{\widetilde{S}_{1}(\boldsymbol{\sigma})}{(N_{1})^{1/2}},\dfrac{\widetilde{S}_{2}(\boldsymbol{\sigma})}{(N_{2})^{3/4}}\right)
	\end{equation}
	weakly converges to 
	\begin{equation}\label{result.theorem}
	\dfrac{\exp\left(-\xi_{1}x_{1}^{2}-\xi_{2}x_{2}^{4}\right)dx_{1}dx_{2}}{\displaystyle{\int_{\mathbb{R}^{2}}}\exp\left(-\xi_{1}x_{1}^{2}-\xi_{2}x_{2}^{4}\right) 
	dx_{1}dx_{2}}.
	\end{equation}
\end{teorema}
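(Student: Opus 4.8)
The plan is to compute the joint moment generating function of the rescaled vector (\ref{vettore.S.1}) and to show that it converges to the transform of the measure (\ref{result.theorem}); by standard arguments this yields the asserted weak convergence. The engine of the computation is the Hubbard--Stratonovich linearization: since $\mathbf{J}$ is positive definite, for every configuration one may write $\exp\!\big(\tfrac{1}{2N}\langle\mathbf{J}\mathbf{S},\mathbf{S}\rangle\big)$ as a two-dimensional Gaussian integral in an auxiliary variable $\mathbf{t}$, after which the spins decouple and the sum over $\boldsymbol{\sigma}\in\{\pm1\}^{N}$ factorizes into $\cosh^{N_{1}}$ and $\cosh^{N_{2}}$ terms. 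Performing the affine change of variable $\mathbf{t}=\mathbf{J}\mathbf{A}^{2}\mathbf{x}$ converts the exponent exactly into $-N\,G(\mathbf{x})$, so that both $Z_{N}$ and the tilted partition function carrying the linear coupling to $\widetilde{\mathbf{S}}$ are represented as integrals over $\mathbb{R}^{2}$ weighted by $\exp(-N G)$. I would record this representation as a preliminary lemma.

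Next I would pass to the eigenbasis of the Hessian. Writing $\mathbf{H}=\nabla^{2}G(\mathbf{0})$, the matrix $\mathbf{A}$ and the orthogonal matrix $\mathbf{P}$ entering the definition (\ref{definizione.S.tilde}) are chosen precisely so that the linear coupling to $\widetilde{\mathbf{S}}$ in the integral above becomes a coupling to the coordinates $(u_{M},u_{m})$ of $\mathbf{x}$ along the normalized eigenvectors $\mathbf{v}_{M},\mathbf{v}_{m}$; the $\mathbf{A}^{2}$-weights reconcile the extensive variables $\mathbf{S}$ with the integration variable $\mathbf{x}$, whose quadratic form is $\mathbf{A}^{2}\mathbf{J}\mathbf{A}^{2}$. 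Since the single-spin distribution $\rho$ is symmetric, $G$ is even, hence all odd Taylor coefficients at the origin vanish. In the eigenbasis the quadratic part of $G$ is $\tfrac12(\lambda_{M}u_{M}^{2}+\lambda_{m}u_{m}^{2})$; the hypothesis $\lambda_{m}=0$ removes the quadratic term in the direction $\mathbf{v}_{m}$, so that the leading term there is a quartic $\xi\,u_{m}^{4}$ with $\xi>0$, the positivity being forced by the fact that the origin is a strict minimum while the quadratic curvature in that direction is absent.

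The asymptotics are then obtained by an anomalous Laplace expansion. Rescaling $u_{M}=w_{M}/N^{1/2}$ and $u_{m}=w_{m}/N^{1/4}$ --- the scalings dictated by $\widetilde{S}_{1}/N_{1}^{1/2}$ and $\widetilde{S}_{2}/N_{2}^{3/4}$, recalling that $\widetilde{\mathbf{S}}$ is extensive of order $N$ --- makes both $N\cdot\tfrac12\lambda_{M}u_{M}^{2}$ and $N\cdot\xi u_{m}^{4}$ of order one, while every surviving even term of higher order carries a strictly negative power of $N$: the quartics $u_{M}^{4}$, $u_{M}^{2}u_{m}^{2}$, $u_{M}u_{m}^{3}$ produce factors $N^{-1}$, $N^{-1/2}$, $N^{-1/4}$ respectively, and all odd terms vanish by evenness, so every off-diagonal and higher correction disappears in the limit. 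Dominated convergence then yields the product exponent $-\xi_{1}x_{1}^{2}-\xi_{2}x_{2}^{4}$, with $\xi_{1}$ proportional to $\lambda_{M}$ and $\xi_{2}$ proportional to the quartic coefficient, both strictly positive; this is exactly (\ref{result.theorem}).

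The main obstacle is the control of the integrals uniformly in $N$ so that the limit may be taken inside. Three points require care, and I would isolate them as the lemmas proved in the appendix: (i) a tail estimate showing that, because the origin is the \emph{unique} minimum of $G$ and $G$ grows away from it under the convexity hypothesis, the contribution of the region $\|\mathbf{x}\|\ge\delta$ is exponentially negligible compared with $\exp(-N G(\mathbf{0}))$; (ii) a uniform remainder bound legitimating the interchange of limit and integral under the degenerate, direction-dependent scaling, which is where the strict positivity of the quartic coefficient is essential for integrability of the limiting density; and (iii) the verification that the Gaussian auxiliary fluctuations, living on the scale $N^{1/2}$, do not feed into the anomalous direction living on the coarser scale $N^{3/4}$. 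Step (ii), the degenerate Laplace estimate with two different rates, is the genuinely delicate part.
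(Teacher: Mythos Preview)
Your strategy is essentially the paper's: the Hubbard--Stratonovich linearization, the passage to the eigenbasis of $\nabla^{2}G(\mathbf 0)$, the anisotropic Laplace expansion with scales $N^{-1/2}$ and $N^{-1/4}$ in the two eigendirections, and the tail/remainder controls you list as (i)--(ii) match the paper's Lemmas~2 and~3. The paper packages the Hubbard--Stratonovich step as an explicit convolution: it introduces an independent Gaussian $(W_1,W_2)$ with covariance $(\mathbf A\widetilde{\mathbf J}\mathbf A)^{-1}$ and shows that the law of
\[
\Bigl(\tfrac{\widetilde S_1}{N_1^{1/2}}+W_1,\;\tfrac{\widetilde S_2}{N_2^{3/4}}+\tfrac{W_2}{N_2^{1/4}}\Bigr)
\]
is \emph{exactly} proportional to $\exp\!\bigl(-N\widetilde G(x_1/N_1^{1/2},x_2/N_2^{1/4})\bigr)$, whence the Laplace step yields the limit $\exp(-\zeta_1 x_1^{2}-\zeta_2 x_2^{4})$ with $\zeta_1=\lambda_M/(2\alpha)$.

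There is, however, a genuine gap in your point~(iii). The auxiliary Gaussian fluctuations live on the scale $N^{1/2}$ in \emph{both} eigendirections. You correctly note that in the anomalous direction $\mathbf v_m$ this is finer than the $N^{3/4}$ scale of $\widetilde S_2$, so it washes out and $\xi_2=\zeta_2$. But in the non-degenerate direction $\mathbf v_M$ the auxiliary Gaussian and $\widetilde S_1/N_1^{1/2}$ live on the \emph{same} scale, so the Laplace asymptotics deliver the density of their \emph{convolution}, not of $\widetilde S_1/N_1^{1/2}$ alone. Hence your claim that ``$\xi_1$ is proportional to $\lambda_M$'' is wrong: $\zeta_1=\lambda_M/(2\alpha)$ is the precision of the convolution, and one must still remove the variance of the auxiliary Gaussian marginal (in the paper's notation, $\alpha\widetilde J_{22}/\det\widetilde{\mathbf J}$) to obtain the limiting variance
\[
d=\frac{\alpha}{\lambda_M}-\frac{\alpha\widetilde J_{22}}{\widetilde J_{11}\widetilde J_{22}-\widetilde J_{12}^{2}},
\]
and then $\xi_1=(2d)^{-1}$. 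The paper isolates this as a separate deconvolution lemma (if $X^N\Rightarrow\nu$ has nowhere-vanishing characteristic function and $X^N+Y^N\Rightarrow\nu*\mu$, then $Y^N\Rightarrow\mu$) and checks $d>0$ by writing it as $\alpha\det\mathcal H_\Phi(\mathbf 0)/(\lambda_M\det\widetilde{\mathbf J})$ for the strictly convex $\Phi=\langle\widetilde{\mathbf J}\,\cdot,\cdot\rangle-\widetilde G$. In your MGF formulation the same issue surfaces when you absorb the tilt: the shift hits only the $\cosh$ arguments, not the quadratic part of $\widetilde G$, so the displaced quadratic produces an extra Gaussian factor which is precisely the transform of the auxiliary variable and must be cancelled before reading off $\xi_1$.
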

\vspace{0.7cm}
\noindent We claim that the coefficients $\xi_{1}$ and $\xi_{2}$ are related to the pressure functional $G$ and will be computed explicitly in the proof.
\section{Proof of the Statement}
\noindent Let us start by determining for which values of the model parameters the origin is the unique minimum point of the function 
$G$, that is the unique solution of the system: 
\begin{equation}\label{mean.field.equations}
\begin{cases}
x_{1} &\!\!\!\!= \tanh\left(\alpha J_{11}x_{1}+(1-\alpha)J_{12}x_{2}\right) \\
x_{2} &\!\!\!\!=\tanh\left(\alpha J_{12}x_{1}+(1-\alpha)J_{22}x_{2}\right)
\end{cases} 
\end{equation}
that represents the estremality conditions of $G$. Since $J_{12}\neq 0$, after inverting the hyperbolic tangent in the two equations, we 
can rewrite the system (\ref{mean.field.equations}) in the following fashion: 
\begin{equation}\label{sistema.invertito}
\begin{cases}
x_{2} &\!\!\!\!=\dfrac{1}{(1-\alpha)J_{12}}\left(\tanh^{-1}(x_{1})-\alpha J_{11}x_{1}\right)\\
x_{1} &\!\!\!\!= \dfrac{1}{\alpha J_{12}}\left(\tanh^{-1}(x_{2})-(1-\alpha)J_{22}x_{2}\right)
\end{cases} 
\end{equation}
that lends itself to a graphic resolution. Considered the Cartesian coordinate system $x_1x_2$, defined the functions
\begin{align*}
f_{1}(x_1)&=\dfrac{1}{(1-\alpha)J_{12}}\left(\tanh^{-1}(x_{1})-\alpha J_{11}x_{1}\right),\\
f_{2}(x_1)&=\dfrac{1}{\alpha J_{12}}\left(\tanh^{-1}(x_{1})-(1-\alpha)J_{22}x_{1}\right),
\end{align*}
and denoted by $\gamma_{1}$ and $\gamma_{2}$ respectively the graph of $f_{1}$ and $f_{2}$, the solutions of (\ref{sistema.invertito}) 
are the intersections between $\gamma_{1}$ and the symmetrical curve of $\gamma_{2}$ with respect to the line $x_2=x_1$ (in the following
we denote the latter curve by $\hat{\gamma}_{2}$). 
\begin{figure}[!h]
\centering
\includegraphics[scale=0.35]{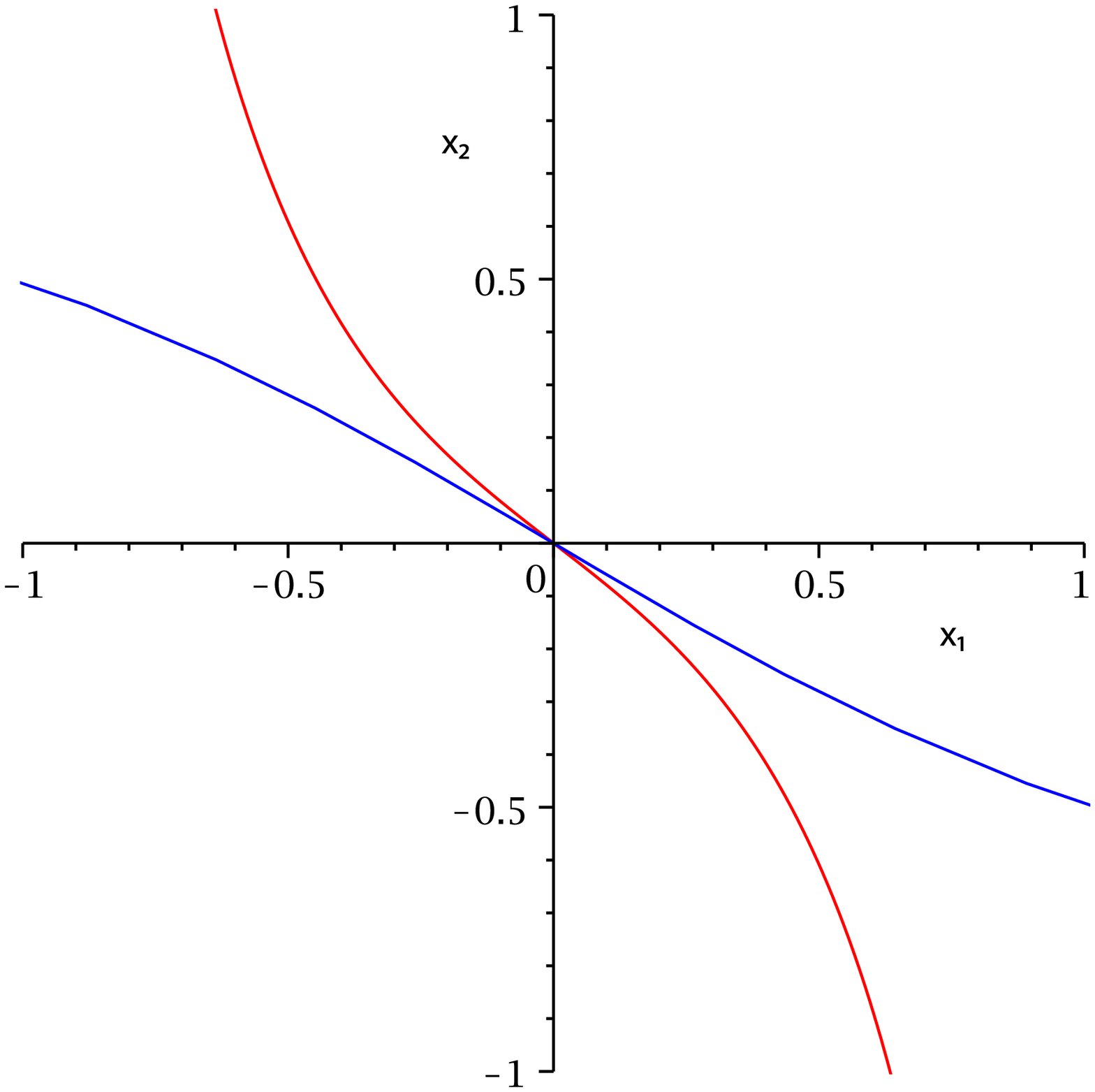} \qquad
\includegraphics[scale=0.35]{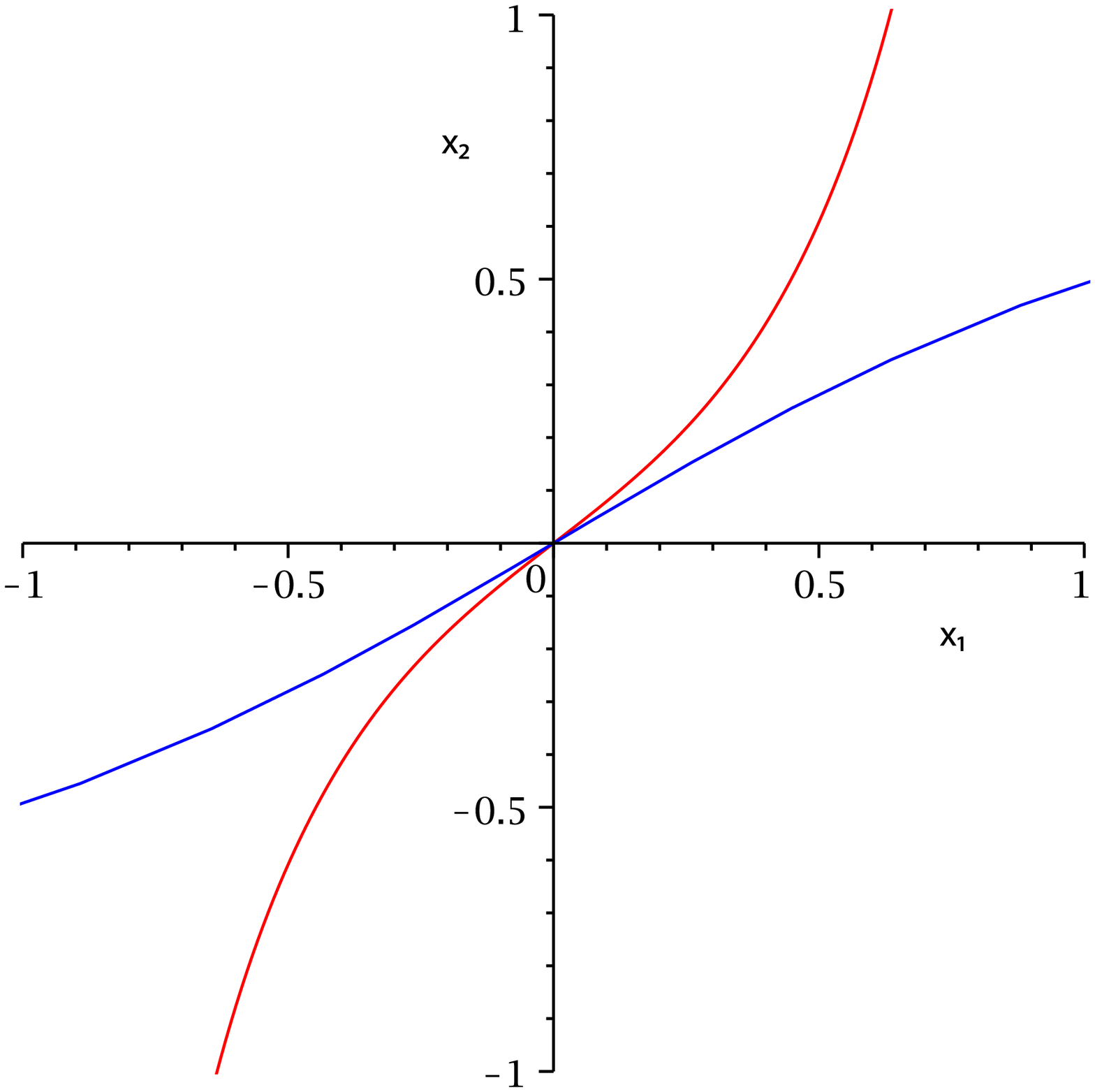}
\caption{Graphic rapresentation of the system (\ref{sistema.invertito}) in the case of a unique solution. The red unlimited curve is 
$\gamma_{1}$ while the blue limited curve is $\hat{\gamma}_{2}$. In the left panel $J_{12}<0$, in the right panel $J_{12}>0$.}
\label{fig}
\end{figure}

The functions $f_1$ and $f_2$ have no inflection points over the origin. Therefore, for a unique intesection of $\gamma_{1}$ and 
$\hat{\gamma}_{2}$ in the origin, $f_1$ and $f_2$ both
must be strictly increasing or decreasing and the slope in the origin to $\gamma_1$ must be bigger in absolute value
than the slope in the origin to $\hat{\gamma}_{2}$. By computing the first derivatives of $f_{1}$ and $f_{2}$ it is easy to show that the condition on the 
monotonicity is fulfilled as 
$J_{11}<\alpha^{-1}$ and $J_{22}<(1-\alpha)^{-1}$, while the other condition is true as 
\begin{equation}\label{derivatecurve}
 (1-\alpha J_{11})(1-(1-\alpha)J_{22})\geq \alpha(1-\alpha)J_{12}^{2}\;.
\end{equation}
In particular, considered the elements of the Hessian matrix of $G$ in the origin
\begin{align}\label{elementi.Hessiana}
 \mathcal{H}_{11} &= \alpha^{2}J_{11}(1-\alpha J_{11})-\alpha^{2}(1-\alpha)J_{12}^{2},\nonumber\\
 \mathcal{H}_{12} &= \alpha(1-\alpha)J_{12}(1-\alpha J_{11}-(1-\alpha)J_{22}),\\
\mathcal{H}_{22} &= (1-\alpha)^{2}J_{22}(1-(1-\alpha)J_{22})-\alpha(1-\alpha)^{2}J_{12}^{2},\nonumber
\end{align}
when the inequality (\ref{derivatecurve}) is verified as an identity we have:
\begin{equation*}
 \lambda_{m}= \frac{1}{2}\left(\mathcal{H}_{11}+\mathcal{H}_{22}-\sqrt{(\mathcal{H}_{11}-\mathcal{H}_{22})^{2}+4\mathcal{H}_{12}^{2}}
\right)=0.
\end{equation*}
Therefore the hypotheses of the theorem are true when the model parameters verify the following system of conditions:
\begin{equation}\label{00.minimum.point}
\begin{cases}
J_{12}\neq 0,\\
J_{11}< \alpha^{-1},\\
 J_{22}< (1-\alpha)^{-1},\\
(1-\alpha J_{11})(1-(1-\alpha)J_{22})= \alpha(1-\alpha)J_{12}^{2},\\
\alpha J_{11}+(1-\alpha)J_{22}-1>0; 
\end{cases}
\end{equation}
whose latter inequality is a direct consequence of the positive definiteness of the matrix $\mathbf{J}$ when the
(\ref{derivatecurve}) is verified as an identity.

To prove the theorem, considered the orthogonal matrix 
$\mathbf{P}=(\mathbf{v}_{M}||\mathbf{v}_{M}||^{-1};\mathbf{v}_{m}||\mathbf{v}_{m}||^{-1})$ of the normalized 
eingenvectors of the Hessian matrix of $G$ in the origin, we introduce the function 
$\widetilde{G}(\mathbf{x})=G(\mathbf{P}^{-1}\mathbf{x})$, that will play the same role of the function $G$ in \cite{fedele2011scaling}.
 Although the result (\ref{result.theorem}) holds for any possible choice of the eigenvectors, to make 
the proof clear it is worth to work with the explicit expressions of two of them. By choosing 
\begin{equation}\label{eigenvectors}
\mathbf{v}_{M}=\left(1;\frac{\lambda_{M}-\mathcal{H}_{11}}{\mathcal{H}_{12}}\right)\qquad 
\mathbf{v}_{m}=\left(1;\frac{\lambda_{m}-\mathcal{H}_{11}}{\mathcal{H}_{12}}\right)
\end{equation}
where $\mathcal{H}_{11}$, $\mathcal{H}_{12}$ and $\mathcal{H}_{22}$ are given by (\ref{elementi.Hessiana}) and considering the Euclidean norm, we have: 
\begin{align}\label{function.G.tilde}
\widetilde{G}(x_{1},x_{2})&=\frac{1}{2}\left(\widetilde{J}_{11} x_{1}^{2}+2\widetilde{J}_{12}x_{1}x_{2}+\widetilde{J}_{22}x_{2}^{2}
\right)-\alpha\ln\cosh(a_{1}x_{1}+a_{2}x_{2})\nonumber\\
&\quad-(1-\alpha)\ln\cosh(b_{1}x_{1}+b_{2}x_{2})
\end{align}
with:
\begin{align}\label{coefficenti.Jtilde}
 \widetilde{J}_{11}&=\dfrac{J_{11}\alpha^{2}\mathcal{H}_{12}^{2}\!+\! J_{22}(1\!-\!\alpha)^{2}(\lambda_{M}\!-\!\mathcal{H}_{11})^{2}\!+\!
2J_{12}\alpha(1\!-\!\alpha)\mathcal{H}_{12}(\lambda_{M}\!-\!\mathcal{H}_{11})}{\mathcal{H}_{12}^{2}+(\lambda_{M}\!-\!\mathcal{H}_{11})^{2}},\nonumber\\
\widetilde{J}_{22}&=\dfrac{J_{11}\alpha^{2}\mathcal{H}_{12}^{2}\!+\! J_{22}(1\!-\!\alpha)^{2}(\lambda_{m}\!-\!\mathcal{H}_{11})^{2}\!+\!
2J_{12}\alpha(1\!-\!\alpha)\mathcal{H}_{12}(\lambda_{m}\!-\!\mathcal{H}_{11})}{\mathcal{H}_{12}^{2}+(\lambda_{m}\!-\!\mathcal{H}_{11})^{2}},\\
\widetilde{J}_{12}&=\dfrac{J_{11}\alpha^{2}\mathcal{H}_{12}^{2}\!+\! J_{22}(1\!-\!\alpha)^{2}(\lambda_{M}\!-\!\mathcal{H}_{11})
(\lambda_{m}\!-\!\mathcal{H}_{11})\!+\!
J_{12}\alpha(1\!-\!\alpha)\mathcal{H}_{12}(\mathcal{H}_{22}\!-\!\mathcal{H}_{11})}{\sqrt{(\mathcal{H}_{12}^{2}+(\lambda_{M}\!-\!\mathcal{H}_{11})^{2})
(\mathcal{H}_{12}^{2}+(\lambda_{m}\!-\!\mathcal{H}_{11})^{2})}},\nonumber
\end{align}
and:
\begin{align}\label{coefficenti.abtilde}
a_{1}&=\dfrac{J_{11}\alpha|\mathcal{H}_{12}|+J_{12}(1-\alpha)\sgn(\mathcal{H}_{12})(\lambda_{M}-\mathcal{H}_{11})}
{\sqrt{\mathcal{H}_{12}^{2}+(\lambda_{M}-\mathcal{H}_{11})^{2}}},\nonumber\\
a_{2}&=\dfrac{J_{11}\alpha|\mathcal{H}_{12}|+J_{12}(1-\alpha)\sgn(\mathcal{H}_{12})(\lambda_{m}-\mathcal{H}_{11})}
{\sqrt{\mathcal{H}_{12}^{2}+(\lambda_{m}-\mathcal{H}_{11})^{2}}},\\
b_{1}&=\dfrac{J_{12}\alpha|\mathcal{H}_{12}|+J_{22}(1-\alpha)\sgn(\mathcal{H}_{12})(\lambda_{M}-\mathcal{H}_{11})}
{\sqrt{\mathcal{H}_{12}^{2}+(\lambda_{M}-\mathcal{H}_{11})^{2}}},\nonumber\\
b_{2}&=\dfrac{J_{12}\alpha|\mathcal{H}_{12}|+J_{22}(1-\alpha)\sgn(\mathcal{H}_{12})(\lambda_{m}-\mathcal{H}_{11})}
{\sqrt{\mathcal{H}_{12}^{2}+(\lambda_{m}-\mathcal{H}_{11})^{2}}}.\nonumber
\end{align}\\
\noindent The proof needs the following three lemmas.
\begin{lemma}\label{lemma.1}
	Suppose that for each $N$, $\mathbf{X}^{N}=(X^{N}_{1},X^{N}_{2})$ and $\mathbf{Y}^{N}=(Y^{N}_{1},Y^{N}_{2})$ are independent random vectors and 
	that $\mathbf{X}^{N}$ weakly converges to a distribution $\nu$ such that
	\begin{equation*}
	\int_{\mathbb{R}^{2}} e^{i\langle\mathbf{r},\mathbf{x}\rangle}d\nu(\mathbf{x})\neq 0 \quad\quad\text{for all}\;\;\mathbf{r}\in\mathbb{R}^{2}.
	\end{equation*}
	Then $\mathbf{Y}^{N}$ weakly converges to $\mu$ if and only if $\mathbf{X}^{N}+\mathbf{Y}^{N}$ weakly converges to the convolution $\nu *\mu$ 
	of the distributions $\nu$ and $\mu$.
\end{lemma}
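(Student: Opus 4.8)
The plan is to pass to characteristic functions and invoke L\'evy's continuity theorem, which states that a sequence of random vectors converges weakly if and only if their characteristic functions converge pointwise to a function continuous at the origin, in which case that function is the characteristic function of the weak limit. For a random vector $\mathbf{Z}$ I write $\phi_{\mathbf{Z}}(\mathbf{r})=\mathbb{E}\big[e^{i\langle\mathbf{r},\mathbf{Z}\rangle}\big]$, and set $\widehat{\nu}(\mathbf{r})=\int_{\mathbb{R}^{2}}e^{i\langle\mathbf{r},\mathbf{x}\rangle}\,d\nu(\mathbf{x})$ and similarly $\widehat{\mu}$. The starting observation is that, since $\mathbf{X}^{N}$ and $\mathbf{Y}^{N}$ are independent, the characteristic function of their sum factorizes:
\begin{equation*}
\phi_{\mathbf{X}^{N}+\mathbf{Y}^{N}}(\mathbf{r})=\phi_{\mathbf{X}^{N}}(\mathbf{r})\,\phi_{\mathbf{Y}^{N}}(\mathbf{r}),\qquad\mathbf{r}\in\mathbb{R}^{2}.
\end{equation*}
Moreover the hypothesis that $\mathbf{X}^{N}$ weakly converges to $\nu$ (written $\mathbf{X}^{N}\Rightarrow\nu$) gives, by the easy half of L\'evy's theorem, $\phi_{\mathbf{X}^{N}}(\mathbf{r})\to\widehat{\nu}(\mathbf{r})$ for every $\mathbf{r}$, while the standing assumption is precisely that $\widehat{\nu}(\mathbf{r})\neq0$ for all $\mathbf{r}$.

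For the forward implication I would assume $\mathbf{Y}^{N}\Rightarrow\mu$, so that $\phi_{\mathbf{Y}^{N}}(\mathbf{r})\to\widehat{\mu}(\mathbf{r})$. Multiplying the two convergences and using the factorization yields $\phi_{\mathbf{X}^{N}+\mathbf{Y}^{N}}(\mathbf{r})\to\widehat{\nu}(\mathbf{r})\widehat{\mu}(\mathbf{r})$ pointwise. Since $\widehat{\nu}\,\widehat{\mu}$ is exactly the Fourier transform of the convolution $\nu*\mu$, and is continuous at the origin with value $1$, the continuity theorem gives $\mathbf{X}^{N}+\mathbf{Y}^{N}\Rightarrow\nu*\mu$. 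Note that this direction does not use the nonvanishing of $\widehat{\nu}$; it requires only the product formula for the transform of a convolution.

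The reverse implication is where the nonvanishing hypothesis does its work, and I expect it to be the only delicate point. Assuming $\mathbf{X}^{N}+\mathbf{Y}^{N}\Rightarrow\nu*\mu$, one has $\phi_{\mathbf{X}^{N}+\mathbf{Y}^{N}}(\mathbf{r})\to\widehat{\nu}(\mathbf{r})\widehat{\mu}(\mathbf{r})$. I would then recover $\phi_{\mathbf{Y}^{N}}$ by dividing: because $\phi_{\mathbf{X}^{N}}(\mathbf{r})\to\widehat{\nu}(\mathbf{r})\neq0$, for each fixed $\mathbf{r}$ the denominator $\phi_{\mathbf{X}^{N}}(\mathbf{r})$ is nonzero for all sufficiently large $N$, so that
\begin{equation*}
\phi_{\mathbf{Y}^{N}}(\mathbf{r})=\frac{\phi_{\mathbf{X}^{N}+\mathbf{Y}^{N}}(\mathbf{r})}{\phi_{\mathbf{X}^{N}}(\mathbf{r})}\longrightarrow\frac{\widehat{\nu}(\mathbf{r})\widehat{\mu}(\mathbf{r})}{\widehat{\nu}(\mathbf{r})}=\widehat{\mu}(\mathbf{r}).
\end{equation*}
The subtlety is that the continuity theorem applies only once the limiting function is known to be continuous at the origin; here this is guaranteed because $\widehat{\mu}$ is the characteristic function of the prescribed probability measure $\mu$ appearing in the hypothesis, hence automatically continuous at $0$ with value $1$. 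Thus $\phi_{\mathbf{Y}^{N}}\to\widehat{\mu}$ pointwise with a limit continuous at the origin, and L\'evy's theorem yields $\mathbf{Y}^{N}\Rightarrow\mu$, completing the proof. The whole argument is essentially the statement that the nonvanishing of $\widehat{\nu}$ is exactly what permits one to cancel the factor $\phi_{\mathbf{X}^{N}}$ and pass back and forth between the limits of $\mathbf{Y}^{N}$ and of $\mathbf{X}^{N}+\mathbf{Y}^{N}$.
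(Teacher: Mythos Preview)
Your argument is correct and is precisely the approach the paper takes: its proof simply states that the result is a direct consequence of the equivalence between weak convergence of measures and pointwise convergence of characteristic functions. Your write-up is a fully unpacked version of that one-line proof, with the same factorization $\phi_{\mathbf{X}^{N}+\mathbf{Y}^{N}}=\phi_{\mathbf{X}^{N}}\phi_{\mathbf{Y}^{N}}$ and the same use of the nonvanishing of $\widehat{\nu}$ to divide in the reverse direction.
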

\begin{proof} 
The result is a direct consequence of the equivalence between the weak convergence of measures and the pointwise convergence of 
characteristic functions (see \cite{durrett2004probability}).
\end{proof}

\begin{lemma}\label{lemma.2}
	If the matrix $\mathbf{J}$ of a model defined by the Hamiltonian (\ref{Hamiltonian.2}) is positive definite, then for any 
$N\in\mathbb{N}\setminus\{0\}$
	\begin{equation}\label{result.lemma.2}
	\int_{\mathbb{R}^{2}}\exp\left(\!-N\widetilde{G}(\mathbf{x})\right)d\mathbf{x}<\infty
	\end{equation}
	where the function $\widetilde{G}$ is given by (\ref{function.G.tilde}).
\end{lemma}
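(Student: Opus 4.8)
The plan is to reduce the claim to a coercivity estimate for the original pressure functional $G$ and then to a routine Gaussian-type integrability bound. First I would exploit that $\mathbf{P}$ is orthogonal: since $\widetilde{G}(\mathbf{x})=G(\mathbf{P}^{-1}\mathbf{x})$ and $|\det\mathbf{P}|=1$, the change of variables $\mathbf{y}=\mathbf{P}^{-1}\mathbf{x}$ has unit Jacobian and gives
\begin{equation*}
\int_{\mathbb{R}^{2}}\exp\left(-N\widetilde{G}(\mathbf{x})\right)d\mathbf{x}=\int_{\mathbb{R}^{2}}\exp\left(-NG(\mathbf{y})\right)d\mathbf{y},
\end{equation*}
so it suffices to prove finiteness of the integral of $\exp(-NG)$.

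Second, I would isolate the quadratic part of $G$. Reading off (\ref{pressure.functional}), this part equals $\tfrac{1}{2}\langle\mathbf{A}^{2}\mathbf{J}\mathbf{A}^{2}\mathbf{y},\mathbf{y}\rangle$, because $\mathbf{A}^{2}=\diag\{\alpha,1-\alpha\}$ conjugates $\mathbf{J}$ into the symmetric matrix with entries $\alpha^{2}J_{11}$, $\alpha(1-\alpha)J_{12}$ and $(1-\alpha)^{2}J_{22}$. Since $\mathbf{J}$ is positive definite and $\mathbf{A}^{2}$ is invertible, the congruent matrix $\mathbf{A}^{2}\mathbf{J}\mathbf{A}^{2}$ is positive definite as well; denoting by $c>0$ its smallest eigenvalue, the quadratic part of $G$ is bounded below by $\tfrac{c}{2}\|\mathbf{y}\|^{2}$.

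Third, I would control the two logarithmic terms. Using the elementary bound $\ln\cosh(u)\leq|u|$ together with the fact that the arguments $\alpha J_{11}y_{1}+(1-\alpha)J_{12}y_{2}$ and $\alpha J_{12}y_{1}+(1-\alpha)J_{22}y_{2}$ are linear in $\mathbf{y}$, the subtracted contributions are bounded in absolute value by $C\|\mathbf{y}\|$ for a constant $C$ depending only on the model parameters. Combining the two estimates yields the coercive lower bound $G(\mathbf{y})\geq\tfrac{c}{2}\|\mathbf{y}\|^{2}-C\|\mathbf{y}\|$.

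Finally, this gives the pointwise majorization $\exp(-NG(\mathbf{y}))\leq\exp\bigl(-\tfrac{Nc}{2}\|\mathbf{y}\|^{2}+NC\|\mathbf{y}\|\bigr)$, whose integral over $\mathbb{R}^{2}$ is finite for every $N\geq1$ by comparison with a shifted Gaussian, proving (\ref{result.lemma.2}). I do not expect a genuine obstacle here; the only point demanding care is verifying that positive definiteness of $\mathbf{J}$ is inherited by $\mathbf{A}^{2}\mathbf{J}\mathbf{A}^{2}$, which guarantees that the quadratic term dominates. Once this is secured, the at-most-linear growth of the $\ln\cosh$ terms makes the integrability automatic.
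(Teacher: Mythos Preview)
Your proof is correct and rests on the same two ingredients as the paper's argument: positive definiteness of the quadratic form governing $G$ (equivalently $\widetilde G$) and the linear bound $\ln\cosh(u)\le |u|$. The organisation, however, differs in two respects worth noting. First, you undo the rotation and work with $G$ directly, whereas the paper stays with $\widetilde G$ and encodes the $\ln\cosh$ bound by introducing the auxiliary function $\bar G(\mathbf{x},\mathbf{t})$ with $\mathbf{t}\in\{-1,1\}^{2}$; since $\widetilde{\mathbf J}=\mathbf{P}^{-1}\mathbf{A}^{2}\mathbf{J}\mathbf{A}^{2}\mathbf{P}$, the two quadratic forms are orthogonally conjugate and this distinction is cosmetic. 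Second, and more substantively, the paper establishes the $N=1$ case by the Gaussian bound and then reaches general $N$ by induction via $\exp(-N\widetilde G)\le e^{-m}\exp(-(N-1)\widetilde G)$, while your coercivity estimate $G(\mathbf y)\ge\tfrac{c}{2}\|\mathbf y\|^{2}-C\|\mathbf y\|$ handles all $N$ at once. Your route is therefore a bit more direct; the paper's inductive scheme, on the other hand, makes explicit the uniform lower boundedness of $\widetilde G$ (the constant $m$), which is reused elsewhere in the proof of Lemma~\ref{lemma.3}.
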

\noindent See appendix A for the proof.
\begin{lemma}\label{lemma.3}
	Denoted by $m=\min\{\widetilde{G}(\mathbf{x}):\mathbf{x}\in\mathbb{R}^{2}\}$, where $\widetilde{G}$ is defined in (\ref{function.G.tilde}), let $V$ be 
	any closed (possibly unbounded) subset of $\mathbb{R}^{2}$ which contains no global minima of the function $\widetilde{G}$. Then there exists 
	$\varepsilon>0$ such that
	\begin{equation}\label{result.lemma.3}
	e^{Nm}\int_{V} \exp(-N\widetilde{G}(\mathbf{x}))d\mathbf{x}=O(e^{-N\varepsilon}) \quad\quad\quad N\rightarrow\infty.
	\end{equation}
\end{lemma}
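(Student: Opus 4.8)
The plan is to reduce the estimate to two regimes---a bounded region, where one uses continuity and compactness, and a neighbourhood of infinity, where one uses the coercivity of $\widetilde{G}$ together with the integrability furnished by Lemma \ref{lemma.2}. The starting observation is that $\widetilde{G}$ is coercive. Indeed, since $\widetilde{G}(\mathbf{x})=G(\mathbf{P}^{-1}\mathbf{x})$ and the quadratic part of $G$ is the positive definite form associated with $\mathbf{A}^{2}\mathbf{J}\mathbf{A}^{2}$ (positive definite because $\mathbf{J}$ is and $\mathbf{A}$ is invertible), the quadratic part of $\widetilde{G}$ is governed by the congruent, hence positive definite, matrix $\mathbf{P}^{-T}\mathbf{A}^{2}\mathbf{J}\mathbf{A}^{2}\mathbf{P}^{-1}$. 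Using the elementary bound $\ln\cosh t\le|t|$ to control the two logarithmic terms by a quantity linear in $\|\mathbf{x}\|$, one obtains a lower bound of the form $\widetilde{G}(\mathbf{x})\ge c\|\mathbf{x}\|^{2}-C\|\mathbf{x}\|$ with $c>0$, so that $\widetilde{G}(\mathbf{x})\to+\infty$ as $\|\mathbf{x}\|\to\infty$.

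Consequently, for every $\delta>0$ there is a radius $R$ with $\widetilde{G}(\mathbf{x})\ge m+\delta$ whenever $\|\mathbf{x}\|\ge R$. Fix such an $R$ and split the integral over $V$ according to $\|\mathbf{x}\|\le R$ and $\|\mathbf{x}\|> R$. The bounded piece is contained in the compact set $V\cap\overline{B_{R}}$, where $B_{R}$ denotes the ball of radius $R$; by hypothesis this set contains no global minimum of $\widetilde{G}$, and since $\widetilde{G}$ is continuous it attains there a minimum $m_{R}>m$. Hence
\begin{equation*}
e^{Nm}\int_{V\cap \overline{B_{R}}}e^{-N\widetilde{G}(\mathbf{x})}\,d\mathbf{x}\le \mathrm{Vol}(B_{R})\,e^{-N(m_{R}-m)},
\end{equation*}
which is $O(e^{-N\varepsilon_{1}})$ with $\varepsilon_{1}=m_{R}-m>0$.

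For the unbounded part I would factor $e^{-N\widetilde{G}}=e^{-(N-1)\widetilde{G}}\,e^{-\widetilde{G}}$ and use the tail bound $\widetilde{G}\ge m+\delta$ valid for $\|\mathbf{x}\|> R$: for $N\ge 2$,
\begin{equation*}
e^{Nm}\int_{\{\mathbf{x}\in V:\,\|\mathbf{x}\|>R\}}e^{-N\widetilde{G}(\mathbf{x})}\,d\mathbf{x}\le e^{Nm}e^{-(N-1)(m+\delta)}\int_{\mathbb{R}^{2}}e^{-\widetilde{G}(\mathbf{x})}\,d\mathbf{x}.
\end{equation*}
The remaining integral is finite by Lemma \ref{lemma.2} applied with $N=1$, and the exponential prefactor equals $e^{m}e^{-(N-1)\delta}$, so this contribution is $O(e^{-N\varepsilon_{2}})$ with $\varepsilon_{2}=\delta$. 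Taking $\varepsilon=\min\{\varepsilon_{1},\varepsilon_{2}\}$ and adding the two pieces yields the claim.

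The one genuinely delicate point is the treatment of the unbounded part of $V$: were $V$ compact, the compactness argument alone would suffice, but for a general closed $V$ one cannot bound the tail by the area times the maximal integrand. This is exactly where coercivity and Lemma \ref{lemma.2} enter, the factorization $e^{-N\widetilde{G}}=e^{-(N-1)\widetilde{G}}\,e^{-\widetilde{G}}$ being the device that converts the uniform tail bound on $\widetilde{G}$ into geometric decay in $N$ while retaining an $N$-independent finite integral.
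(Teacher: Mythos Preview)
Your argument is correct and close in spirit to the paper's, though you take a slightly longer path. The paper's proof proceeds in one stroke: it asserts directly that $\inf_{\mathbf{x}\in V}\widetilde{G}(\mathbf{x})\ge m+\varepsilon$ for some $\varepsilon>0$, and then applies exactly your factorization $e^{-N\widetilde{G}}=e^{-(N-1)\widetilde{G}}e^{-\widetilde{G}}$ on all of $V$, bounding the remaining integral by Lemma~\ref{lemma.2}. You instead split $V$ into a bounded piece (handled by a volume-times-maximum bound) and an unbounded tail (handled exactly as in the paper), and you explicitly justify coercivity of $\widetilde{G}$ beforehand. The net effect is the same; the only real difference is that the paper's one-line assertion that the infimum over $V$ exceeds $m$ implicitly relies on the coercivity you spell out, so your version is in fact the more complete argument. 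Once coercivity is in hand, you could equally well combine your two pieces into the single observation $\inf_{V}\widetilde{G}\ge\min\{m_{R},m+\delta\}>m$ and then run the factorization once, which is what the paper does.
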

\noindent See appendix B for the proof.\vspace{0.3cm}\\
\indent Now we are ready to prove the theorem \ref{teo}. We will proceed in two steps. First, considered the random vector 
$(W_{1},W_{2})$ with joint distribution 
\begin{equation}\label{bivariate.Gaussian}
\dfrac{\sqrt{\det(\mathbf{A}\mathbf{\widetilde{J}}\mathbf{A}})}{2\pi}
\exp\left(-\frac{1}{2}\langle\mathbf{A}\mathbf{\widetilde{J}}\mathbf{A}\mathbf{w},\mathbf{w}\rangle\right)
\end{equation}
where $\mathbf{A}=\diag\{\sqrt{\alpha},\sqrt{1-\alpha}\}$ and $\mathbf{\widetilde{J}}$ is the 
symmetric $2\times 2$ matrix of 
elements $\widetilde{J}_{ij}$ defined in (\ref{coefficenti.Jtilde}), we show that, when $(W_{1},W_{2})$ is independent of $\mathbf{\widetilde{S}}$ for each $N$,
the distribution of the random vector
\begin{equation}\label{sum.vectors}
\left(\dfrac{\widetilde{S}_{1}(\boldsymbol{\sigma})}{(N_{1})^{1/2}},\dfrac{\widetilde{S}_{2}(\boldsymbol{\sigma})}{(N_{2})^{3/4}}\right)+
\left(W_{1},\dfrac{W_{2}}{(N_{2})^{1/4}}\right)
\end{equation}
is given by
\begin{equation}\label{step.1}
 \dfrac{\exp\left(-N\widetilde{G}\left(\dfrac{x_{1}}{(N_{1})^{1/2}},\dfrac{x_{2}}{(N_{2})^{1/4}}\right)\right)dx_{1}dx_{2}}
{\displaystyle{\int_{\mathbb{R}^{2}}}\exp\left(-N\widetilde{G}\left(\dfrac{x_{1}}{(N_{1})^{1/2}},\dfrac{x_{2}}{(N_{2})^{1/4}}\right)
\right)dx_{1}dx_{2}},
\end{equation}
that is a well defined distribution because the involved integral is finite by lemma (\ref{lemma.2}). Then we will analize the 
distribution (\ref{step.1}) as $N\rightarrow\infty$. 

\noindent Given $\theta_{1},\theta_{2}$ real
\begin{multline}\label{prob.1}
P\left\{\!W_{1}+\dfrac{\widetilde{S}_{1}(\boldsymbol{\sigma})}{(N_{1})^{1/2}}\!\leq\theta_{1},\!\dfrac{W_{2}}{(N_{2})^{1/4}}
+\dfrac{\widetilde{S}_{2}(\boldsymbol{\sigma})}{(N_{2})^{3/4}}\leq\theta_{2}\!\right\}\\
=P\left\{(N_{1})^{1/2}\widetilde{W}_{1}+\widetilde{m}_{1}(\boldsymbol{\sigma})\in E_{1},(N_{2})^{1/2}\widetilde{W}_{2}+
\widetilde{m}_{2}(\boldsymbol{\sigma})\in E_{2}
\right\}
\end{multline}
where 
\begin{equation}\label{nuove.variabili.aleatorie}
\widetilde{W}_{i}=\frac{W_{i}}{N_{i}}\qquad 
\widetilde{m}_{i}(\boldsymbol{\sigma})=\frac{\widetilde{S}_{i}(\boldsymbol{\sigma})}{N_{i}} \qquad i=1,2
\end{equation}
 while $E_{1}=(-\infty,\;(N_{1})^{-1/2}\theta_{1}]$ and $E_{2}=(-\infty,\;(N_{2})^{-1/4}\theta_{2}]$. Since $\mathbf{\widetilde{S}}$ is 
independent of
$(W_{1},W_{2})$, from equality (\ref{prob.1}) it follows that the distribution of the random vector (\ref{sum.vectors}) is the convolution 
of the distribution of $\left((N_{1})^{1/2}\widetilde{W}_{1},(N_{2})^{1/2}\widetilde{W}_{2}\right)$ with the distribution of 
$\mathbf{\widetilde{m}}=(\widetilde{m}_{1}(\boldsymbol{\sigma}),\widetilde{m}_{2}(\boldsymbol{\sigma}))$. By (\ref{bivariate.Gaussian}), the former distribution is:
\begin{equation*}
\dfrac{N\sqrt{\det\mathbf{\widetilde{J}}}}{2\pi} \;\exp\left(\!-\frac{N}{2}\langle\mathbf{\widetilde{J}}\mathbf{w},\mathbf{w}\rangle\right) 
\end{equation*}
while, by (\ref{measure.BG}), the latter is:
%\noindent where it is easy to verify that $\widetilde{\mathbf{A}}=1/N\mathbf{J}$. We claim that since the matrix $\mathbf{A}$ is positive define also $\widetilde{\mathbf{A}}$ has this property. 
%Expressed the Hamiltonian (\ref{hamiltoniana}) as function of the sums of spins
%\begin{equation}\label{Hamiltonianasomme}
%H_{N}(S_{1},\dots, S_{n})=-\frac{1}{2}\sum_{l, s=1}^{n} J_{ls}\sqrt{\frac{\alpha_{l}\alpha_{s}}{N_{l}N_{s}}}S_{l}S_{s}-\sum_{l=1}^{n}h_{l}S_{l}\;.
%\end{equation}
\begin{equation*}\label{distribution.sums.spins}
\frac{1}{Z_{N}}\exp\left(\frac{N}{2}\langle\mathbf{\widetilde{J}}\mathbf{m},\mathbf{m}\rangle\right)d\nu_{\mathbf{\widetilde{m}}}(\mathbf{m})
\end{equation*}
\noindent with $\nu_{\mathbf{\widetilde{m}}}(\mathbf{m})$ the distribution of $\mathbf{\widetilde{m}}$ on 
$(\mathbb{R}^{N},\prod_{i=1}^{N}\rho(\sigma_{i}))$. 
%\newpage 
\noindent Thus:
\begin{align*}
P&\left\{(N_{1})^{1/2}\widetilde{W}_{1}+\widetilde{m}_{1}(\boldsymbol{\sigma})\in E_{1},(N_{2})^{1/2}\widetilde{W}_{2}+
\widetilde{m}_{2}(\boldsymbol{\sigma})\in E_{2}\right\}\nonumber\\
&=\dfrac{N\sqrt{\det\mathbf{\widetilde{J}}}}{2\pi Z_{N}}\!\!\iint_{E_{1}\times E_{2}\times\mathbb{R}^{2}}\!\!\!\!\!\!\!
\exp\bigg(\!\frac{N}{2}\Big(\!-
\langle\mathbf{\widetilde{J}}(\mathbf{w}-\mathbf{m}),(\mathbf{w}-\mathbf{m})\rangle+\langle\mathbf{\widetilde{J}}\mathbf{m},\mathbf{m}\rangle\Big)\bigg)d\nu_{\mathbf{\widetilde{m}}}(\mathbf{m})d\mathbf{w}
\nonumber\\
&=\dfrac{N\sqrt{\det\mathbf{\widetilde{J}}}}{2\pi Z_{N}}
\!\!\int_{E_{1}\times E_{2}}\!\!\!\!\!\exp\left(\!\!-\frac{N}{2}\langle\mathbf{\widetilde{J}}\mathbf{w},\mathbf{w}\rangle\right)\!\!
\int_{\mathbb{R}^{2}}\exp\left(N\langle\mathbf{\widetilde{J}}\mathbf{w},\mathbf{m}\rangle\right)d\nu_{\mathbf{\widetilde{m}}}
(\mathbf{m})d\mathbf{w}.
\end{align*}
%where $\bigotimes\limits_{l=1}^{2} E_{l}$ denotes the Cartesian product of the sets $E_{1}$ and $E_{2}$. 
\noindent By considering the definition of $\mathbf{\widetilde{S}}$ given in (\ref{definizione.S.tilde}), we can write the elements of 
$\mathbf{\widetilde{m}}$, given in (\ref{nuove.variabili.aleatorie}), in the following way:
\begin{align*}
\widetilde{m}_{1}(\boldsymbol{\sigma}) &= \dfrac{||\mathbf{v}_{M}||}{(\lambda_{M}-\lambda_{m})}
\left((\mathcal{H}_{11}-\lambda_{m})\dfrac{S_{1}(\boldsymbol{\sigma})}{N_{1}}+\mathcal{H}_{12}\dfrac{S_{2}(\boldsymbol{\sigma})}{N_{2}}
\right)\\
\widetilde{m}_{2}(\boldsymbol{\sigma}) &=\dfrac{||\mathbf{v}_{m}||}{(\lambda_{M}-\lambda_{m})}
\left((\lambda_{M}-\mathcal{H}_{11})\dfrac{S_{1}(\boldsymbol{\sigma})}{N_{1}}-\mathcal{H}_{12}\dfrac{S_{2}(\boldsymbol{\sigma})}{N_{2}}
\right)
\end{align*}
that allows to calculate:
\begin{align*}
\int_{\mathbb{R}^{2}}\exp\left(N\langle\mathbf{\widetilde{J}}\mathbf{w},\mathbf{m}\rangle\right)d\nu_{\mathbf{\widetilde{m}}}(\mathbf{m})&=\int_{\mathbb{R}^{N}}\exp\bigg(\frac{N\sum_{i\in P_{1}}\sigma_{i}}{N_{1}(\lambda_{M} -\lambda_{m})}\Big(||\mathbf{v}_{M}||(\mathcal{H}_{11}-\lambda_{m})(\widetilde{J}_{11}w_{1}\\
&\quad +\widetilde{J}_{12}w_{2})+||\mathbf{v}_{m}||(\lambda_{M}-\mathcal{H}_{11})(\widetilde{J}_{12}w_{1}+\widetilde{J}_{22}w_{2})\Big)\\
&\quad +\frac{N\mathcal{H}_{12}\sum_{i\in P_{2}}\sigma_{i}}{N_{2}(\lambda_{M}-\lambda_{m})}\Big(
||\mathbf{v}_{M}||(\widetilde{J}_{11}w_{1}+\widetilde{J}_{12}w_{2})\\
&\quad+||\mathbf{v}_{m}||(\widetilde{J}_{12}w_{1}+\widetilde{J}_{22}w_{2})\Big)\bigg)\prod_{i=1}^{N}d\rho(\sigma_{i})\\
&=\cosh^{N_{1}}\Big(\widetilde{a}_{1}w_{1}+\widetilde{a}_{2}w_{2}\Big)\cosh^{N_{2}}\left(\widetilde{b}_{1}w_{1}+\widetilde{b}_{2}
w_{2}\right)
\end{align*}
where:
\begin{align*}
\widetilde{a}_{1} &=\dfrac{\widetilde{J}_{11}||\mathbf{v}_{M}||(\mathcal{H}_{11}-\lambda_{m})+\widetilde{J}_{12}||\mathbf{v}_{m}||
(\lambda_{M}-\mathcal{H}_{11})}{\alpha(\lambda_{M}-\lambda_{m})}\\
\widetilde{a}_{2} &=\dfrac{\widetilde{J}_{12}||\mathbf{v}_{M}||(\mathcal{H}_{11}-\lambda_{m})+
\widetilde{J}_{22}||\mathbf{v}_{m}||(\lambda_{M}-\mathcal{H}_{11})}{\alpha(\lambda_{M}-\lambda_{m})}\\
\widetilde{b}_{1} &=\dfrac{\mathcal{H}_{12}(\widetilde{J}_{11}||\mathbf{v}_{M}||+\widetilde{J}_{12}||\mathbf{v}_{m}||)
}{(1-\alpha)(\lambda_{M}-\lambda_{m})}\\
\widetilde{b}_{2} &=\dfrac{\mathcal{H}_{12}(\widetilde{J}_{12}||\mathbf{v}_{M}||+
\widetilde{J}_{22}||\mathbf{v}_{m}||)}{(1-\alpha)(\lambda_{M}-\lambda_{m})}. 
\end{align*}
By computing the Euclidean norm of the two eigenvectors $\mathbf{v}_{M}$, $\mathbf{v}_{m}$ defined in (\ref{eigenvectors}) and considering the explicit expressions of $\widetilde{J}_{11}$, $\widetilde{J}_{12}$ and 
$\widetilde{J}_{22}$, given in (\ref{coefficenti.Jtilde}), it is easy to show that $\widetilde{a}_{i}=a_{i}$ and 
$\widetilde{b}_{i}=b_{i}$, $i=1,2$, where $a_{i}$ and $b_{i}$ for $i=1,2$, are defined in (\ref{coefficenti.abtilde}). 
Therefore after making the change of variable $x_{1}=(N_{1})^{1/2}w_{1}$,  $x_{2}=(N_{2})^{1/4}w_{2}$ and integrating over $\mathbf{s}$, we have:
\begin{align}\label{last.passage.step.1}
P\Big\{(N_{1})&^{1/2}\widetilde{W}_{1}+\widetilde{m}_{1}(\boldsymbol{\sigma})\in E_{1},(N_{2})^{1/2}\widetilde{W}_{2}+
\widetilde{m}_{2}(\boldsymbol{\sigma})\in E_{2}\Big\}\nonumber\\
&=\dfrac{N^{1/4}\sqrt{\det\mathbf{\widetilde{J}}}}{2\pi Z_{N}\alpha^{1/2}(1-\alpha)^{1/4}}
\int_{-\infty}^{\theta_{1}}\int_{-\infty}^{\theta_{2}}\exp\bigg(-\frac{N}{2}\bigg(\widetilde{J}_{11}
\left(\dfrac{x_{1}}{(N_{1})^{1/2}}\right)^{2}\nonumber\\
&\quad+\dfrac{2\widetilde{J}_{12}x_{1}x_{2}}{(N_{1})^{1/2}(N_{2})^{1/4}}+\widetilde{J}_{22}\left(\dfrac{x_{2}}{(N_{2})^{1/4}}\right)^{2}
\bigg)+N\alpha\ln\cosh\left(\dfrac{a_{1}x_{1}}
{(N_{1})^{1/2}}+\dfrac{a_{2}x_{2}}{(N_{2})^{1/4}}\right)\nonumber\\
&\quad+N(1-\alpha)\ln\cosh\left(\dfrac{b_{1}x_{1}}{(N_{1})^{1/2}}+\dfrac{b_{2}x_{2}}{(N_{2})^{1/4}}\right)\bigg)dx_{1}dx_{2}\nonumber\\
&=\dfrac{N^{1/4}\sqrt{\det\mathbf{\widetilde{J}}}}{2\pi Z_{N}\alpha^{1/2}(1-\alpha)^{1/4}}\int_{-\infty}^{\theta_{1}}
\int_{-\infty}^{\theta_{2}}\exp\left(-N\widetilde{G}\left(\dfrac{x_{1}}{(N_{1})^{1/2}},\dfrac{x_{2}}{(N_{2})^{1/4}}\right)\right)dx_{1}dx_{2}.
\end{align}

\noindent Taking $\theta_{1}\rightarrow\infty$ and $\theta_{2}\rightarrow\infty$ in the (\ref{last.passage.step.1}), we obtain an equation for 
$Z_{N}$ which when substituted back yields the result (\ref{step.1}). 

Now to conclude the proof of the theorem, by lemma \ref{lemma.1}, we have to analyze the 
distribution (\ref{step.1}) as $N\rightarrow\infty$, keeping in mind that only the first component of the random vector $(W_{1},W_{2})$
 contributes to the limit. Let us start by observing that the hypothesis on $G$ togheter with the definition 
$\widetilde{G}(\mathbf{x})=G(\mathbf{P}^{-1}\mathbf{x})$, imply that the origin is the unique minimum point of the function 
$\widetilde{G}$. Moreover, since the system of conditions (\ref{00.minimum.point}) is satisfied, the Hessian matrix of $\widetilde{G}$ computed in the origin,
$\mathbf{\mathcal{H}}_{\widetilde{G}}(0,0)=\diag\{\lambda_{M},\lambda_{m}\}$, has determinant equal to zero without being the null 
matrix. Thus, by Taylor expansion, there exists $\hat{\delta}>0$ sufficiently small so that, as 
$N\rightarrow\infty$, for $|x_{1}|<\hat{\delta} (N_{1})^{1/2}$ and 
$|x_{2}|<\hat{\delta} (N_{2})^{1/4}$ we can write:
\begin{equation}\label{taylor}
N\cdot \widetilde{G}
\left(\dfrac{x_{1}}{(N_{1})^{1/2}},\dfrac{x_{2}}{(N_{2})^{1/4}}\right)=\zeta_{1}x_{1}^{2}+\zeta_{2}x_{2}^{4}
+\sum_{|\boldsymbol{\eta}|=4\atop \eta_{1}\neq 0}\zeta_{\boldsymbol{\eta}}\frac{x_1^{\eta_1}x_2^{\eta_2}}{N^{\eta_1 /4}}+\sum_{|\boldsymbol{\eta}|=5}R_{\boldsymbol{\eta}}\frac{x_1^{\eta_1}x_2^{\eta_2}}{N^{(\eta_1+1) /4}}
\end{equation}
where $\boldsymbol{\eta}=(\eta_1,\eta_2)\in\mathbb{N}^{2}$ is a multi-index, $|\boldsymbol{\eta}|=\eta_1+\eta_2$, while the coefficients are the followings:
\begin{align}\label{coefficent.1}
\zeta_{1} &=\frac{1}{2!\alpha}\dfrac{\partial^{2}\widetilde{G}}{\partial x_{1}^{2}}(0,0) = \frac{\lambda_{M}}{2\alpha}\\
\zeta_{2} &=\frac{1}{4!(1-\alpha)}\dfrac{\partial^{4}\widetilde{G}}{\partial x_{2}^{4}}(0,0)= \dfrac{2\alpha(\alpha(1-\alpha J_{11})^{2}+
(1-\alpha)(1-(1-\alpha)J_{22})^{2})}{24\left(\alpha(1-\alpha J_{11})+(1-\alpha)(1-(1-\alpha)J_{22})\right)^{2}}\label{coefficent.2}\\
\zeta_{\boldsymbol{\eta}} &=\frac{\partial^{\boldsymbol{\eta}}\widetilde{G}(0,0)}{\boldsymbol{\eta}!\alpha^{\eta_1 /2}
(1-\alpha)^{\eta_2 /4}}\nonumber\\
 R_{\boldsymbol{\eta}} &=\int_{0}^{1}\frac{5(1-t)^4}{\boldsymbol{\eta}!\alpha^{\eta_1 /2}
(1-\alpha)^{\eta_2 /4}}
\partial^{\boldsymbol{\eta}}\widetilde{G}\left(\dfrac{tx_{1}}{(N_{1})^{1/2}},\dfrac{tx_{2}}{(N_{2})^{1/4}}\right)dt\nonumber
\end{align} 
with $\boldsymbol{\eta}!=\eta_{1}!\eta_{2}!$ and $\partial^{\boldsymbol{\eta}}=\partial^{|\boldsymbol{\eta}|}/\partial x_{1}^{\eta_{1}}\partial x_{2}^{\eta_{2}}$.
%\frac{1}{2}(\alpha J_{11}+(1-\alpha)(J_{11}+\epsilon)-1)(\alpha(1-\alpha J_{11})+(1-\alpha)(1-(1-\alpha)(J_{11}+\epsilon)))
\noindent We observe that $\zeta_{1}$ and $\zeta_{2}$ are strictly positive because the model parameters fulfill the system of conditions (\ref{00.minimum.point}). Moreover, since the image under $\widetilde{G}$ of the origin is zero, we can find $\bar{\delta}>0$ sufficiently small so that, as 
$N\rightarrow\infty$, for $|x_{1}|<\bar{\delta} (N_{1})^{1/2}$ and 
$|x_{2}|<\bar{\delta} (N_{2})^{1/4}$
\begin{equation*}
\left|\sum_{|\boldsymbol{\eta}|=4\atop \eta_{1}\neq 0}\zeta_{\boldsymbol{\eta}}\frac{x_1^{\eta_1}x_2^{\eta_2}}{N^{\eta_1 /4}}+\sum_{|\boldsymbol{\eta}|=5}R_{\boldsymbol{\eta}}\frac{x_1^{\eta_1}x_2^{\eta_2}}{N^{(\eta_1+1) /4}}\right|\leq \frac{1}{2}(\zeta_{1}x_{1}^{2}+\zeta_{2}x_{2}^{4}).
\end{equation*}
Thus, defined $\delta=\min\{\hat{\delta},\bar{\delta}\}$, as 
$N\rightarrow\infty$, for $|x_{1}|<\delta (N_{1})^{1/2}$ and 
$|x_{2}|<\delta (N_{2})^{1/4}$ we have: 
\begin{align}\label{convergenza.dominata}
N\cdot \widetilde{G}
\left(\dfrac{x_{1}}{(N_{1})^{1/2}},\dfrac{x_{2}}{(N_{2})^{1/4}}\right)&\geq\zeta_{1}x_{1}^{2}+\zeta_{2}x_{2}^{4}
-\left|\sum_{|\boldsymbol{\eta}|=4\atop \eta_{1}\neq 0}\zeta_{\boldsymbol{\eta}}\frac{x_1^{\eta_1}x_2^{\eta_2}}{N^{\eta_1 /4}}+\sum_{|\boldsymbol{\eta}|=5}R_{\boldsymbol{\eta}}\frac{x_1^{\eta_1}x_2^{\eta_2}}{N^{(\eta_1+1) /4}}\right|\nonumber\\
&\geq \frac{1}{2}(\zeta_{1}x_{1}^{2}+\zeta_{2}x_{2}^{4}).
\end{align}
 Considered the set
$V=\{(x_{1},x_{2})\in\mathbb{R}^{2}:|x_{1}|\geq\delta (N_{1})^{1/2}, |x_{2}|\geq\delta (N_{2})^{1/4} \}$, by lemma \ref{lemma.3} there exists 
$\varepsilon >0$ such that for any bounded continuous function 
$\psi(\mathbf{x}):\mathbb{R}^{2}\rightarrow\mathbb{R}$:
\begin{equation}\label{limit.1}
\iint_{V}\exp\left(\!-N\widetilde{G}
\left(\dfrac{x_{1}}{(N_{1})^{1/2}},\dfrac{x_{2}}{(N_{2})^{1/4}}\right)\right)\psi(x_{1},x_{2})dx_{1}dx_{2}
=O\left(N^{3/4}e^{-N\varepsilon}\right).
\end{equation}
\noindent  On the other hand by (\ref{taylor}), (\ref{convergenza.dominata}) and dominate convergence, as $N\rightarrow\infty$:
\begin{multline}\label{limit.2}
\iint_{\mathbb{R}^{2}\setminus V} \!\!\!\exp\left(\!-N\widetilde{G}
\left(\dfrac{x_{1}}{(N_{1})^{1/2}},\dfrac{x_{2}}{(N_{2})^{1/4}}\right)\right)\psi(x_{1},x_{2})dx_{1}dx_{2}\\
\rightarrow\iint_{\mathbb{R}^{2}}\exp\left(\! -\zeta_{1}x_{1}^{2}-\zeta_{2}x_{2}^{4}\right)
\psi(x_{1},x_{2})dx_{1}dx_{2}.
\end{multline}
Therefore, by (\ref{limit.1}) and (\ref{limit.2}), as $N\rightarrow\infty$ we have that:
\begin{multline*}\label{step.2}
\dfrac{\displaystyle{\iint_{\mathbb{R}^{2}}}\!\exp\left(\!-N\widetilde{G}\left(\dfrac{x_{1}}{(N_{1})^{1/2}},\dfrac{x_{2}}{(N_{2})^{1/4}}
\right)\right)\psi(x_{1},x_{2})dx_{1}dx_{2}}{\displaystyle{\iint_{\mathbb{R}^{2}}}\!\exp\left(\!-N\widetilde{G}
\left(\dfrac{x_{1}}{(N_{1})^{1/2}},\dfrac{x_{2}}{(N_{2})^{1/4}}\right)\right)dx_{1}dx_{2}}\\
\rightarrow \dfrac{\displaystyle{\iint_{\mathbb{R}^{2}}}\!\exp\left(\! -\zeta_{1}x_{1}^{2}-\zeta_{2}x_{2}^{4}\right)\psi(x_{1},x_{2})
dx_{1}dx_{2}}{\displaystyle{\iint_{\mathbb{R}^{2}}}\!\exp\left(-\zeta_{1}x_{1}^{2}-\zeta_{2}x_{2}^{4}\right) dx_{1}dx_{2}}.
\end{multline*}

%\noindent In analogous way, for the denominator of (\ref{step.2}), 
%\begin{equation}\label{dadimostrare2}
%	\displaystyle{\iint_{\mathbb{R}^{2}}} \exp\left(\!-2N
%	\left(\dfrac{x_{1}}{N^{\frac{1}{4}}},\dfrac{x_{2}}{N^{\frac{1}{2}}}\right)\right)dx_{1}dx_{2}
%	\rightarrow \iint_{\mathbb{R}^{2}}\exp\left(\!-2\left(ax_{1}^{4}+bx_{2}^{2}\right)\right)dx_{1}dx_{2}.
%	\end{equation}

\noindent As mentioned previously, while $W_{2}$ does not contribute to the limit of the distribution (\ref{step.1}), the distribution
 obtained in the variable $x_{1}$, a Gaussian with zero mean and variance equal to $(2\zeta_{1})^{-1}$ where $\zeta_{1}$ is given in (\ref{coefficent.1}), is the convolution of the 
marginal distribution of $W_{1}$ with the limiting 
distribution of the first element of the vector (\ref{vettore.S.1}). Since the marginal distribution of $W_{1}$ is Gaussian too, if the difference 
$d$ between the variance of the distribution obtained by convolution and those of $W_{1}$ 
\begin{equation}\label{variance}
 d=\frac{\alpha}{\lambda_{M}}-\frac{\alpha \widetilde{J}_{22}}{\widetilde{J}_{11}\widetilde{J}_{22}-\widetilde{J}_{12}^{2}}
\end{equation}
is positive we can conclude that the limiting distributions of $(N_{1})^{-1/2}\widetilde{S}_{1}(\boldsymbol{\sigma})$ is a 
Gaussian with zero mean and variance equal to $d$.
To prove that $d$ is positive let us consider the strictly convex function 
$\Phi(\mathbf{x})=<\mathbf{\widetilde{J}}\mathbf{x},\mathbf{x}>-\widetilde{G}(\mathbf{x})$. After computing the second partial derivatives of $\Phi$ in the origin:
\begin{equation*}
\dfrac{\partial^{2}\Phi}{\partial x_{1}^{2}}(0,0) = \widetilde{J}_{11}-\lambda_{M},\quad
\dfrac{\partial^{2}\Phi}{\partial x_{1}\partial x_{2}} (0,0)=\widetilde{J}_{12},\quad
\dfrac{\partial^{2}\Phi}{\partial x_{1}^{2}} (0,0)=\widetilde{J}_{22},\\
\end{equation*}
and denoting the Hessian matrix of $\Phi$ by $\mathcal{H}_{\Phi}$, we can write: 
\begin{equation*}
 d=\dfrac{\alpha \det\mathcal{H}_{\Phi}(0,0)}{\lambda_{M}\det \mathbf{\widetilde{J}}}.
\end{equation*}
Since the function $\Phi$ is strictly convex and $\mathbf{\widetilde{J}}=\mathbf{P}^{-1}\mathbf{A}^{2}\mathbf{J}\mathbf{A}^{2}\mathbf{P}$ with 
$\mathbf{A}$ and $\mathbf{J}$ positive definite matrices and $\mathbf{P}$ an ortogonal matrix, we can conclude that $d>0$.\\
Thus the statement (\ref{result.theorem}) is proved by defining $\xi_{1}=(2d)^{-1}$ and $\xi_{2}=\zeta_{2}$ where $d$ is given by (\ref{variance}) and  $\zeta_{2}$ 
by (\ref{coefficent.2}). This concludes the proof of the theorem.

\section{Conclusions and Perspectives}
In this paper we extended previously obtained results (see \cite{fedele2011scaling}) on the limiting behavior of the random vector of total magnetizations for the 
bipartite mean-field model. We worked under the assumptions that the Hamiltonian is a convex function of the total magnetizations, the 
external field is away and the pressure functional admits a unique minimum point, the origin, in which the determinant of the Hessian 
matrix is equal to zero. As a consequence the homogeneity hypothesis on the pressure functional made in 
\cite{fedele2011scaling}, is true only if there is no interaction between particles of different groups, that is the bipartite 
mean-field model degenerates towards to distinct Curie-Weiss models. 
%In this case the behavior of the total magnetizationof each group of particles as their number grows to infinity is described by the results in \cite{ellis1978limit}. 
In the non-degenerate case, we found a non Gaussian limit distribution for the vector of the total magnetizations after being transformed with the orthogonal matrix that diagonalizes the Hessian matrix of the pressure functional. This result allows us to state that in the considered case the bipartite mean-field model undergoes a phase transition. 

The complete description of the asymptotic distribution of  the vector of total magnetizations both for the bipartite and the generic multipartite mean-field model will be subject of further investigations.   

\vspace{1cm}
\noindent {\bf Acknowledgments}: The author wishes to thank Professor C. Newman for interesting discussions and suggestions and 
F. Collet for helpful observations. The author also aknowledges the INdAM-COFUND Marie Curie fellowships for financial support.

\section*{Appendix A: proof of lemma \ref{lemma.2}}
\appendix
\noindent 	Considered the function
	\begin{equation*}
	\bar{G}(x_{1},x_{2},t_{1},t_{2})=\frac{1}{2}\left(\widetilde{J}_{11} x_{1}^{2}+2\widetilde{J}_{12}x_{1}x_{2}+\widetilde{J}_{22}x_{2}^{2}\right)-
	\alpha t_{1}(a_{1}x_{1}+a_{2}x_{2})
	-(1-\alpha)t_{2}(b_{1}x_{1}+b_{2}x_{2})
	\end{equation*}
	where $\mathbf{x}=(x_{1},x_{2})\in\mathbb{R}^{2}$ and $\mathbf{t}=(t_{1},t_{2})\in\{-1,1\}^{2}$ the following inequality holds:
	\begin{equation*}
	\widetilde{G}(\mathbf{x})\geq \min\{\bar{G}(\mathbf{x},\mathbf{t}):\mathbf{t}\in\{-1,1\}^{2}\}.
	\end{equation*}
	Thus:
	\begin{equation}\label{bo}
	\int_{\mathbb{R}^{2}}\exp\left(-\widetilde{G}(\mathbf{x})\right)d\mathbf{x}\leq\int_{\mathbb{R}^{2}}\exp\left(-\min\{\bar{G}(\mathbf{x},\mathbf{t}):
	\mathbf{t}\in\{-1,1\}^{2}\}\right)d\mathbf{x}.
	\end{equation}

	\noindent Since $\mathbf{\widetilde{J}}=\mathbf{P}^{-1}\mathbf{A}^{2}\mathbf{J}\mathbf{A}^{2}\mathbf{P}$ where the matrix $\mathbf{P}$ is 
	orthogonal while the matrices $\mathbf{J}$ and $\mathbf{A}$ are positive definite, the argument of the integral on 
	the right hand side of the inequality (\ref{bo}) is a Gaussian density function. This proves the statement (\ref{result.lemma.2}) for $N=1$. Now
	defined $m=\min\{\widetilde{G}(\mathbf{x}):\mathbf{x}\in\mathbb{R}^{2}\}$ and supposed true the inductive hypothesis:
	\begin{equation}\label{inductive.hypothesis}
	\int_{\mathbb{R}^{2}}\exp\left(-(N-1)\widetilde{G}(\mathbf{x})\right)d\mathbf{x}<\infty
	\end{equation}
	we have:
	\begin{align*}
	\int_{\mathbb{R}^{2}}\exp\left(-N\widetilde{G}(\mathbf{x})\right)d\mathbf{x}&=\int_{\mathbb{R}^{2}}\exp\left(-(N-1)\widetilde{G}(\mathbf{x})\right)
	\exp\left(-\widetilde{G}(\mathbf{x})\right)d\mathbf{x}\\
	&\leq e^{-m}\int_{\mathbb{R}^{2}}\exp\left(-(N-1)\widetilde{G}(\mathbf{x})\right)d\mathbf{x}
	\end{align*}
	where the latter integral is finite by the inductive hypothesis (\ref{inductive.hypothesis}). This proves the result 
(\ref{result.lemma.2}) for any $N\in\mathbb{N}\setminus\{0\}$.

\section*{Appendix B: proof of lemma \ref{lemma.3}}
\appendix
\noindent 	Since the set $V$ contains no global minima of $\widetilde{G}(\mathbf{x})$, there exists $\varepsilon>0$ such that:
	\begin{equation*}
	\inf\{\widetilde{G}(\mathbf{x}):\mathbf{x}\in V\}\geq\inf\{\widetilde{G}(\mathbf{x}):\mathbf{x}\in\mathbb{R}^{2}\}+\varepsilon
	=m+\varepsilon.
	\end{equation*}
	Therefore we can write:
	\begin{align*}
	e^{Nm}\int_{V}\exp(-N\widetilde{G}(\mathbf{x}))d\mathbf{x} &<e^{Nm}e^{-(N-1)(m+\varepsilon)}\int_{V}
	\exp(-\widetilde{G}(\mathbf{x}))d\mathbf{x}\nonumber\\
	&\leq e^{-N\varepsilon}\left(e^{(m+\varepsilon)}\int_{\mathbb{R}^{2}}\exp(-\widetilde{G}(\mathbf{x}))d\mathbf{x}\right)
	\end{align*}
	where the latter integral is finite by lemma \ref{lemma.2}. Thus the statement (\ref{result.lemma.3}) is proved.

\end{document}